\newcommand{\bea}{\begin{eqnarray}}
\newcommand{\eea}{\end{eqnarray}}
\newcommand{\be}{\begin{equation}}
\newcommand{\ee}{\end{equation}}
\newtheorem{thm}{Theorem}
\titleformat{\subsection}[runin]
{\normalfont\small\bfseries}{\thesubsection}{}{}
\begin{document}
	
\title{Quantum Phase Transition induced by Topological Frustration}

\author{Vanja Mari\'{c}}
\affiliation{Division of Theoretical Physics, Ru\dj{}er Bo\v{s}kovi\'{c} Institute, Bijeni\u{c}ka cesta 54, 10000 Zagreb, Croatia}
\affiliation{SISSA and INFN, via Bonomea 265, 34136 Trieste, Italy.}

\author{Salvatore Marco Giampaolo}
\affiliation{Division of Theoretical Physics, Ru\dj{}er Bo\v{s}kovi\'{c} Institute, Bijeni\u{c}ka cesta 54, 10000 Zagreb, Croatia}

\author{Fabio Franchini}
\affiliation{Division of Theoretical Physics, Ru\dj{}er Bo\v{s}kovi\'{c} Institute, Bijeni\u{c}ka cesta 54, 10000 Zagreb, Croatia}

\begin{abstract}
\textbf{Abstract.} In quantum many-body systems with local interactions, the effects of boundary conditions are considered to be negligible, at least for sufficiently large systems. Here we show an example of the opposite. We consider a spin chain with two competing interactions, set on a ring with an odd number of sites. When only the dominant interaction is antiferromagnetic, and thus induces topological frustration, the standard antiferromagnetic order (expressed by the magnetization) is destroyed. When also the second interaction turns from ferro to antiferro, an antiferromagnetic order characterized by a site-dependent magnetization which varies in space with an incommensurate pattern, emerges. This modulation results from a ground state degeneracy, which allows to break the translational invariance. The transition between the two cases is signaled by a discontinuity in the first derivative of the ground state energy and represents a quantum phase transition induced by a special choice of boundary conditions.
\end{abstract}

\preprint{RBI-ThPhys-2020-02}

\maketitle

\section{Introduction}

Modern physics follows a reductionist approach, in that it tries to explain a great variety of phenomena through the minimal amount of variables and concepts.
Thus, a successful theory should apply to a number as large as possible of situations and provide a predictive framework, depending on a number of variables as small as possible, within which one can describe the physical systems of interest.
On the other hand, further discoveries tend to enrich the phenomenology making more complicated, for the existing theories, to continue to predict accurately all the situations, sometimes to the point of exposing the need for new categories altogether.

Landau's theory of phases is a perfect example of such an evolution~\cite{Landau1978}.
Toward the middle of the last century~\cite{Landau1937}, all the different phases of many-body systems obeying classical mechanics were classified in terms of local order parameters that, turning from zero to a non-vanishing value, signal the onset of the corresponding order.
Each order parameter is uniquely associated with a particular kind of order, which in turn can be traced back to a specific local symmetry that is violated in that phase~\cite{Anderson1997}.
Hence symmetries play a key role in Landau's theory, while other features, such as boundary conditions, are deemed negligible (at least in the thermodynamic limit).

Because of its success, Landau's theory has been borrowed at first without modifications in the quantum regime~\cite{Sachdev2011}.
Nonetheless, after a few years, it has become clear that the richness of quantum many-body systems goes beyond the standard Landau paradigm.
Indeed, topologically ordered phases~\cite{Wen1989,Wen1990}, that have no equivalent in the classical regime, as well as nematic ones~\cite{Shannon2006}, represent instances in which violation of the same symmetry is associated with different (typically non-local) and non-equivalent order parameters~\cite{Lacroix2011,Giampaolo2015,Zonzo2018}, depending on the model under analysis.
This implied that Landau's theory had to be extended to incorporate more general concepts of order, which include the non-local effects that come along with the quantum regime and have no classical counterpart.

In more recent years, even boundary conditions, which are expected to be irrelevant for the onset of a classical ordered phase in the thermodynamic limit, have been shown to play a role when paired with quantum interactions.
Intuitively, one supposes that the contributions of boundary terms, that increase slowly with the size of the system with respect to the bulk ones, can be neglected when the dimension of the system diverges \cite{Burkhards1985,Cabrera1986,Cabrera1987}.
Recently, this intuition has been challenged. Thus in \cite{Campostrini2016} a concrete example of a boundary-driven quantum phase transition was provided, showing that, by tuning the coupling between the edges of an open chain, the system can visit different phases. In this line of research, particular attention was devoted to analyzing one-dimensional translational-invariant antiferromagnetic (AFM) spin models with frustrated boundary conditions (FBC), i.e, periodic boundary conditions in rings with an odd number of sites $N$.
For purely classical systems (Ising chains), FBC produce $2N$ degenerate lowest energy states, characterized by one domain wall defect in one of the two Neel orders.
Quantum effects split this degeneracy, producing, in the thermodynamic limit, a Galilean band of gapless excitations in touch with the lowest energy state(s)~\cite{Dong2016,Dong2017,Dong2018,Li2019} in a phase that, without frustration, would otherwise be gapped. In particular, while without frustration, the ground state of these models can be mapped exactly into the vacuum of a free fermionic system, the effect of FBC is to add a single excitation over this vacuum ~\cite{Giampaolo2019}. The na\"ive expectation is that, as the chain length is increased, the contributions from this single quasi-particle get diluted up to becoming irrelevant in thermodynamic limit. 
But this is not what was observed in \cite{Maric2019} where, in the presence of FBC, a short range dominant AFM interaction competes with a ferromagnetic one.
Indeed, the single-particle excitation brings $1/N$ corrections to the fundamental Majorana correlation functions, but these contributions can add up in the physical observables, due to the peculiar strongly correlated nature of the system. For instance, the two-point function, whose connected component is usually separated in the long distance limit to extract the spontaneous magnetization, acquires a multiplicative algebraic correction that suppresses it toward zero at distances scaling like the system size~\cite{Dong2016,Maric2019,MaricToeplitz}. The vanishing of the spontaneous magnetization and the replacement of the standard AFM local order with a mesoscopic ferromagnetic one was also established through the direct evaluation of the one point function in~\cite{Maric2019,MaricToeplitz}.

In the present work, we focus on the transition that occurs when also the second interaction becomes AFM.
This transition is characterized, even at finite size, by a level crossing associated with a discontinuity in the first derivative of the free energy at zero temperature (i.e., the ground state energy). In the phase where both interactions are AFM, the ground state becomes four-fold degenerate and this increased degeneracy allows for the existence of a different magnetic order.
This order is characterized by a staggered magnetization as in the standard AFM case, but with a modulation that makes its amplitude slowly varying in space. The results are surprising not only because of the order we find, but also because the quantum phase transition, signaled by the discontinuity, does not exist with other boundary conditions (BC), such as open (OBC) or periodic (PBC) boundary conditions with an even number of sites $N$. For this reason we term it "Boundary-conditions-induced Quantum Phase Transition" (BCI QPT).


\section{Results}

\subsection{Level crossing:} We illustrate our results by discussing the XY chain at zero field in FBC.
Even if this phenomenology is not limited to this model, it is useful to focus on it, because exploiting the well--known Jordan--Wigner transformation~\cite{Jordan1928} we can evaluate all the quantities that we need with an almost completely analytical approach.
The Hamiltonian describing this system reads
\begin{equation}\label{Hamiltonian}
    H=\sum\limits_{j=1}^N \cos\phi \ \sigma_j^x\sigma_{j+1}^x  +\sin\phi \ \sigma_j^y\sigma_{j+1}^y \; ,
\end{equation}
where $\sigma_j^\alpha$, with $\alpha=x,y,z$, are Pauli matrices and $N$ is the number of spins in the lattice.
Having assumed frustrated boundary conditions, we have that $N=2M+1$ is odd and $\sigma_j^\alpha\equiv \sigma_{j+N}^\alpha$.
The angle $\phi\in(-\frac{\pi}{4},\frac{\pi}{4})$ tunes the relative weight of the two interactions, as well as the sign of the smaller one.
Hence, while the role of the dominant term is always played by the AFM interaction along the $x$-direction, we have that the second Ising--like interaction switches from FM to AFM at $\phi=0$.

Regardless of the value of $\phi$, the Hamiltonian in eq.~\eqref{Hamiltonian} commutes with the parity operators $(\Pi^\alpha \equiv \otimes_{i=1}^N \sigma_i^\alpha)$, i.e. $[H,\Pi^\alpha]=0, \; \forall \alpha$.
At the same time, since we are considering odd $N$, different parity operators satisfy \mbox{$\left\{ \Pi^\alpha, \Pi^\beta \right\} = 2 \delta_{\alpha,\beta}$}, hence implying that each eigenstate is at least two-fold degenerate: if $\ket{\psi}$ is an eigenstate of both $H$ and $\Pi^z$, then $\Pi^x \ket{\psi}$, that differs from $\Pi^y \ket{\psi}$ by a global phase factor, is also an eigenstate of $H$ with the same energy but opposite $z$–parity. These symmetries are important because they imply an exact ground-state degeneracy even in finite chains and thus the possibility to select states with a definite magnetization within the ground state manifold (for more details about the symmetries of the model see Supplementary Note 1). Furthermore, using the techniques introduced in~\cite{Maric2019}, it is possible to directly evaluate the magnetization of these states: having it as a function of the number of sites of the chain, we can take the thermodynamic limit and thus recover directly its macroscopic value, without resorting to the usual approach making use of the cluster decomposition.

\begin{figure}
	\includegraphics[width=0.9\columnwidth]{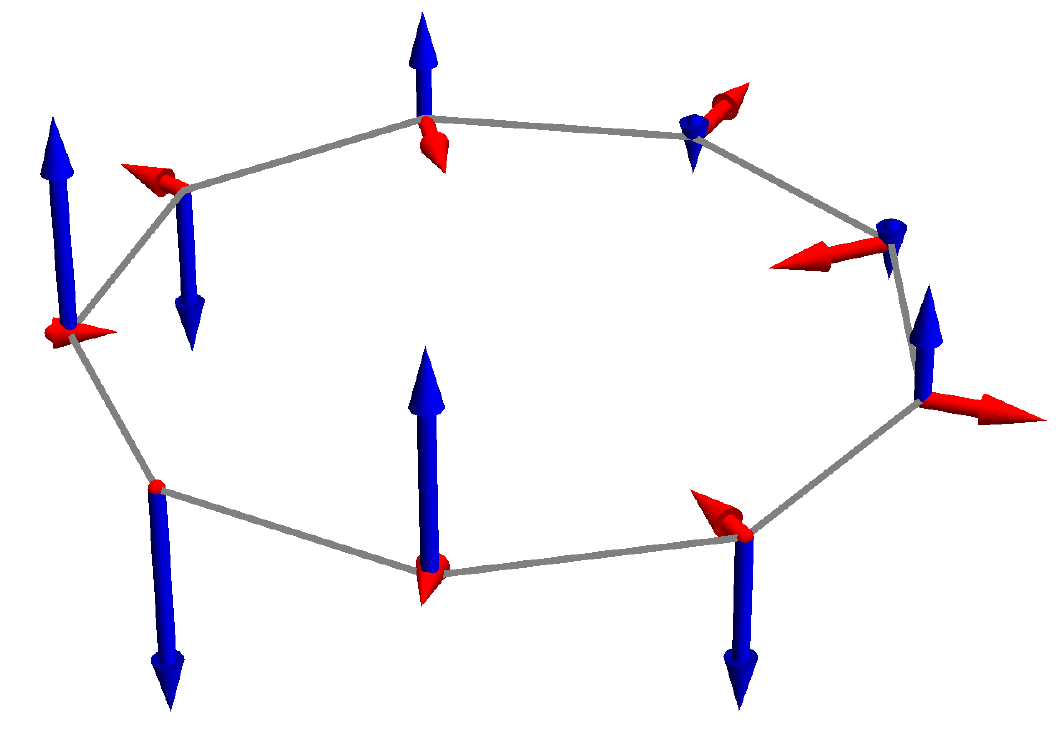}
	\caption{\textbf{Three-dimensional representation of the
			site-dependent magnetization.} (Color online) Site dependent magnetizations along $x$ (Blue darker arrows) and $y$ (Red lighter arrows) for each spin of a lattice with $N=9$ sites. The magnetizations are obtained setting $\phi=\frac{\pi}{8}$ and recovering the maximum amplitudes $f_x\simeq0.613$ and $f_y\simeq0.329$, see discussion around eq.~\eqref{spatial}.}
	\label{plots}
\end{figure}

Using the standard techniques ~\cite{Franchini2017}, that consist in the Jordan-Wigner transformation and a Fourier transform followed by a Bogoliubov rotation (more details in Supplementary Note 2), the Hamiltonian can be reduced to
\begin{eqnarray}
\label{supp_Hamiltonian_2}
H&=&\frac{1+\Pi^z}{2}H^+ \frac{1+\Pi^z}{2} + \frac{1-\Pi^z}{2}H^- \frac{1-\Pi^z}{2} \;, \nonumber \\
H^\pm&=&\sum\limits_{q\in\Gamma^\pm}^{} \varepsilon(q) \left(a_q^\dagger a_q-\frac{1}{2}\right)\,.
\end{eqnarray}
Here $a_q$ ($a_q^\dagger$) is the annihilation (creation) fermionic operator with momentum $q$. The Hilbert space has been divided into the two sectors of different $z$-parity $\Pi^z$. Accordingly, the momenta run over two disjoint sets, corresponding to the two sector: \mbox{$\Gamma^-=\{2\pi k/N \}$} and $\Gamma^+=\{2\pi (k+\frac{1}{2})/N \}$ with $k$ ranging over all integers from $0$ to $N-1$.
The dispersion relation reads
\begin{eqnarray}
\label{energy_momenta}
\epsilon(q) &=& 2 \left| \cos\phi \ e^{\imath 2 q} + \sin\phi \right| ,\  q\neq 0, \pi \ , \nonumber \\
\epsilon(0) &=&- \epsilon(\pi)=2 \left( \cos\phi+\sin\phi \right) \; ,
\end{eqnarray}
where we note that only $\epsilon(0)$, $\epsilon(\pi)$ can become negative.
	
The eigenstates of $H$ are constructed by populating the vacuum states $\ket{0^\pm}$ in the two sectors and by taking care of the parity constraints. The effect of frustration is that the lowest energy states are not admissible due to the parity requirement. For instance, from {eq.~\eqref{energy_momenta} we see that, assuming $\phi\in(-\frac{\pi}{4},\frac{\pi}{4})$, the single negative energy mode is $\epsilon(\pi)$, which lives in the even sector $(\pi\in\Gamma^+)$.
Therefore the lowest energy states are, respectively, $\ket{0^-}$ in the odd sector and $a^\dagger_\pi \ket{0^+}$ in the even one.
But, since both of them violate the parity constraint of the relative sector, they cannot represent physical states.
Hence, the physical ground states must be recovered from $\ket{0^-}$ and $a^\dagger_\pi \ket{0^+}$ considering the minimal excitation coherent with the parity constraint.

While for $\phi<0$ there is a unique state in each parity sector that minimizes the energy while respecting the parity constraint (and these states both have zero momentum), for $\phi>0$ the dispersion relation in eq.~\eqref{energy_momenta} becomes a double well and thus develops two minima: $\pm p \in\Gamma^-$ and $\pm p'\in\Gamma^+$, approximately at $\pi/2$ (for their precise values and more details, see ``{\it Methods}'').
Thus, for $\phi>0$ the ground state manifold becomes $4$-fold degenerate, with states of opposite parity and momenta. This degeneracy has a solid geometrical origin, which goes beyond the exact solution to which the XY is amenable, and has to do with the fact that, with FBC, the lattice translation operator does not commute with the mirror (or chiral) symmetry, except than for states with $0$ or $\pi$ momentum (see Supplementary Note 4). Thus, every other state must come in degenerate doublets of opposite momentum/chirality.
In accordance to this picture, a generic element in the four-dimensional ground state subspace can be written as
\be
\label{ground_state_superposition1}
\ket{g}  = u_1\ket{p}+u_2\ket{-p}+u_3 \ket{p'}+u_4 \ket{-p'} \ , 
\ee
where the superposition parameters satisfy the normalization constraint $\sum_i |u_i|^2=1$, $\ket{\pm p}\!=\!a_{\pm p}^\dagger\!\ket{0^-}$ are states in the odd $z$-parity sector and ${\ket{\pm p'}=\Pi^x\ket{\mp p}=a_{\pm p'}^\dagger a_\pi^\dagger\!\ket{0^+}}$ are the states in the even sector (for the second equality, that holds up to a phase factor, see ``{\it Methods}'').
	
Hence, independently from $N$, once FBC are imposed, the system presents a level crossing at the point $\phi=0$, where the Hamiltonian reduces to the classical AFM Ising.
The presence of the level crossing is reflected on the behavior of the ground state energy $E_{g}$, whose first derivative exhibits a discontinuity
\begin{equation}
\label{derivative}
\frac{dE_g}{d\phi}\bigg|_{\phi\to0^-}-\frac{dE_g}{d\phi}\bigg|_{\phi\to0^+}=2\Big(1+\cos\frac{\pi}{N}\Big) ,
\end{equation}
which goes to a nonzero finite value in the thermodynamic limit.
The presence of both a discontinuity in the first derivative of the ground state energy, and a different degree of degeneracy even at finite sizes, is coherent with a first-order quantum phase transition~\cite{Sachdev2011}.

However, such a transition is present only when FBC are considered.
Indeed, without frustration, hence considering either OPC or PBC conditions in a system with even $N$, the two regions $\phi\in(-\frac{\pi}{4},0)$ and $\phi\in(0,\frac{\pi}{4})$ belong to the same AFM phase, have the same degree of ground-state degeneracy, and exhibit the same physical properties~\cite{Lieb1961,Barouch1971}.
Hence it is the introduction of the FBC that induces the presence of a quantum phase transition at $\phi=0$.

\subsection{The magnetization:}

\begin{figure}
	\includegraphics[width=0.9\columnwidth]{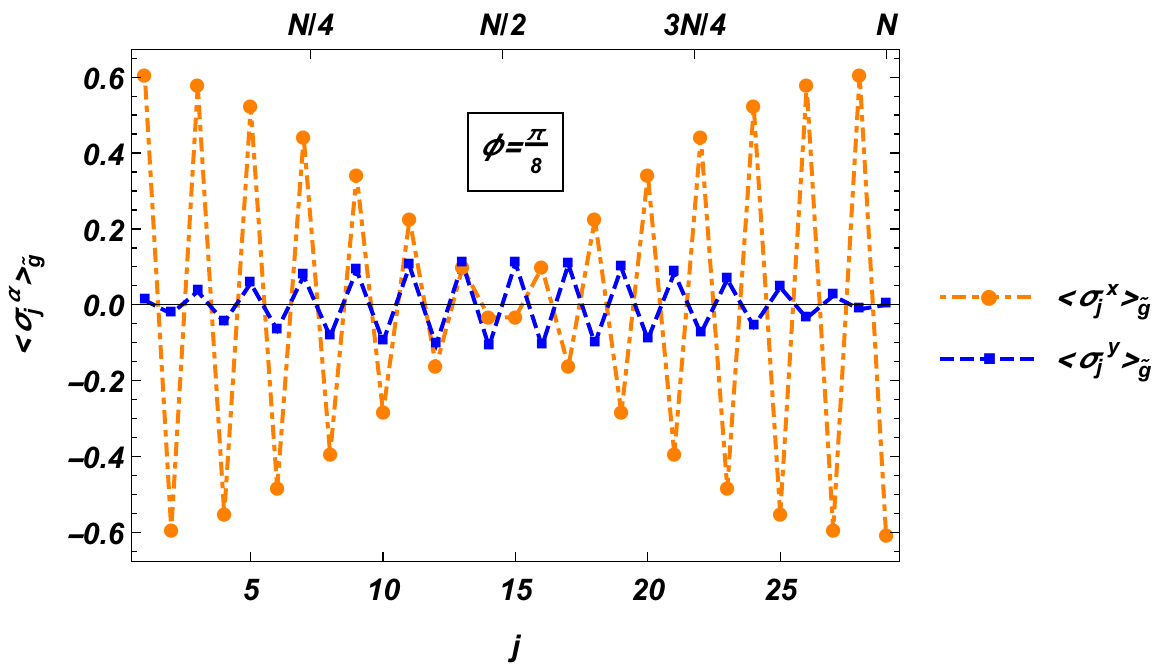}
	\caption{\textbf{Site-dependent magnetization.} (Color online) Plot of the site dependent magnetizations along $x$ (orange points) and $y$ (blue ones) for each spin of a lattice with $N=29$ sites. The magnetizations are obtained setting $\phi=\frac{\pi}{8}$. The dashed lines are a guide to the eye to show the almost staggered order, while the modulation in space is given by eq.~\ref{spatial}.}
	\label{2dplots}
\end{figure}

Having detected a phase transition, we need to identify the two phases separated by it.
In~\cite{Maric2019} it was proved that the two-fold degenerate ground state for $\phi<0$ is characterized by a ferromagnetic mesoscopic order: for any finite odd $N$, the chain exhibits non-vanishing, site-independent, ferromagnetic magnetizations along any spin directions. These magnetizations scale proportionally to the inverse of the system size and, consequently, vanish in the thermodynamic limit.
For suitable choices of the ground state, this mesoscopic magnetic order is present also for $\phi>0$ but, taking into account that now the ground state degeneracy is doubled, this phase can also show a different magnetic order, that is forbidden for $\phi<0$.
However, from all the possible orders that can be realized we can, for sure, discard the standard staggerization that characterizes the AFM order in the absence of FBC.
In fact, for odd $N$, it is not possible to align the spins perfectly antiferromagnetically, while still satisfying PBC.
In a classical system, the chain develops a ferromagnetic defect (a domain wall) at some point, but quantum-mechanically this defect gets delocalized  and its effect is not negligible in the thermodynamic limit as one would naively think.

To study the magnetization let us consider a ground state vector that is not an eigenstate of the translation operator:
\begin{equation}
\label{smartgs}
\ket{\tilde{g}}=\frac{1}{\sqrt{2}}\big( \ket{p} +e^{\imath \theta} \ket{p'}\big) \; ,
\end{equation}
where $\theta$ is a free phase. We compute the expectation value of spin operators on this state. Having broken translational invariance, we can expect the magnetization to develop a site dependence, which can be found by exploiting the translation and the mirror symmetry (see ``{\it Methods}''), giving
\begin{eqnarray}\label{spatial}
\langle \sigma_j^\alpha \rangle_{\tilde{g}} \! &\!=\! &\! (-1)^{j}\! \cos\!\left[ \pi\frac{j}{N}+ \lambda(\alpha,\theta,N)\right] \! f_\alpha \ , \;\;\;\;\;
\end{eqnarray}
where $f_\alpha \equiv |\!\bra{p}\sigma_N^\alpha\ket{p'}\!|$. The two phase factors, whose explicit dependence on the arbitrary phase $\theta$ is given in Supplementary Note 5, are related as $\lambda(y,\theta,N)-\lambda(x,\theta,N)=\pi/2$, which corresponds to a shift by half of the whole ring between the $x$ and $y$ magnetization profiles. The obtained spatial dependence, depicted in Figure~\ref{plots} and \ref{2dplots}, thus breaks lattice translational symmetry, not to a reduced symmetry as in the case of the staggerization that characterizes the standard AFM order, but completely, since we have an incommensurate modulation that depends on the system size over-imposed to the staggerization. 

While the simple argument just presented explains how and why the magnetizations along $x$ and $y$ acquire a nontrivial spatial dependence, we still have to determine how their magnitudes scale with $N$. The magnitudes depend on the spin operator matrix elements $\bra{p}\sigma_N^\alpha\ket{p'}$ and their evaluation is explained in ``{\it Methods}''.

\begin{figure}
    \includegraphics[width=1.0\columnwidth]{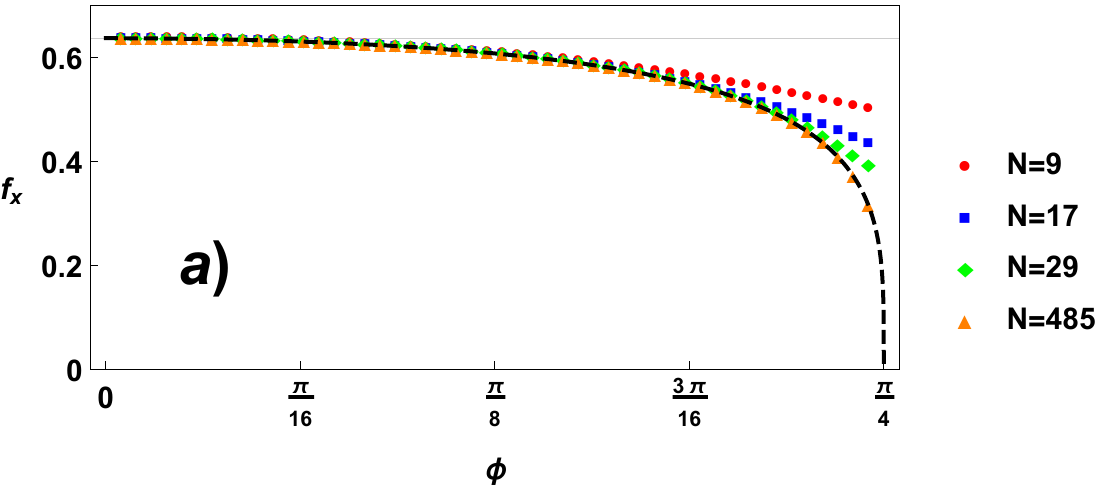}
    \includegraphics[width=1.0\columnwidth]{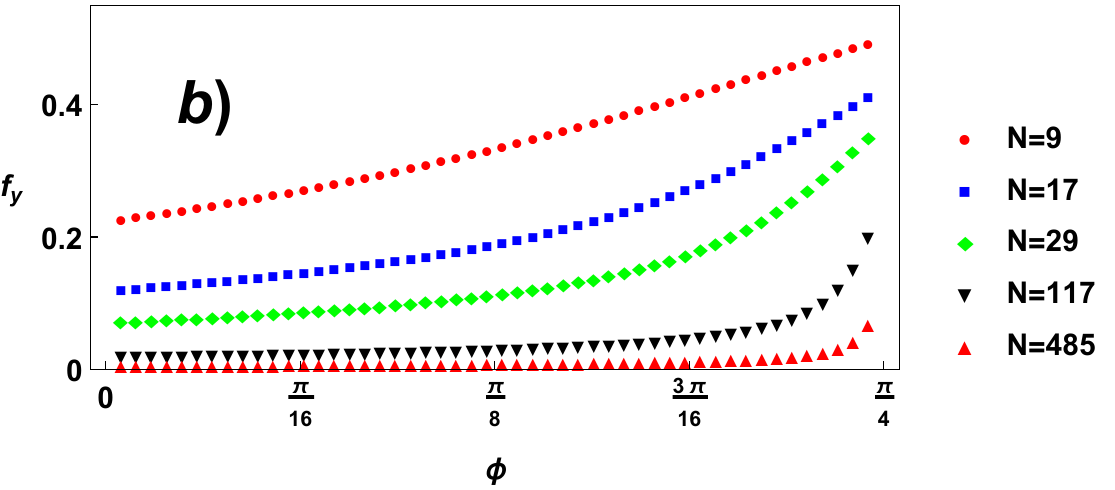}
    \caption{\textbf{Matrix elements that determine the magnetization.} (Color online) Behavior of matrix elements $f_x$ (\textbf{a}) and $f_y$ (\textbf{b}) as function of the Hamiltonian parameter $\phi$ for different sizes of the the system $N$. The magnetizations are site-dependent, as given by the formula $\langle \sigma_j^\alpha \rangle_{\tilde{g}} = (-1)^{j} \cos\left[ \pi\frac{j}{N}+ \lambda(\alpha,\theta,N)\right]  f_\alpha  $ for $\alpha=x,y$, where $\lambda$ is a phase factor that depends on additional details of the ground state. The matrix elements $f_x$ and $f_y$ thus determine the maximal value the magnetization can achieve over the ring.}
    \label{fig2}
\end{figure}
As we can see from Figure~\ref{fig2}, we have two different behaviors for the magnetizations along $x$ and $y$.
While for the former we can see that it admits a finite non zero limit, which is a function of the parameter $\phi>0$, the latter, for large enough systems, is proportional to $1/N$ (see also Figure~\ref{fig3}) and vanishes in the thermodynamic limit.
Hence, differently from the one along the $y$ spin direction, the {\it ``incommensurate antiferromagnetic order''} along $x$ survives also in the thermodynamic limit. By exploiting perturbative analysis around the classical point $\phi=0$ it is possible to show that, for $\phi\to0^+$ and diverging $N$, $f_x$ goes to $2/\pi$ (see Supplementary Note 7 for details). 
Moreover, numerical analysis has also shown that in the whole region $\phi\in(0,\pi/4)$ we have
\begin{equation}
\lim_{N\to\infty} |\bra{p}\sigma_N^x\ket{p'}|=\frac{2}{\pi}(1-\tan^2\phi)^{\frac{1}{4}} \ .
\end{equation}

\section{Discussion}

\begin{figure}
	\includegraphics[width=1\columnwidth]{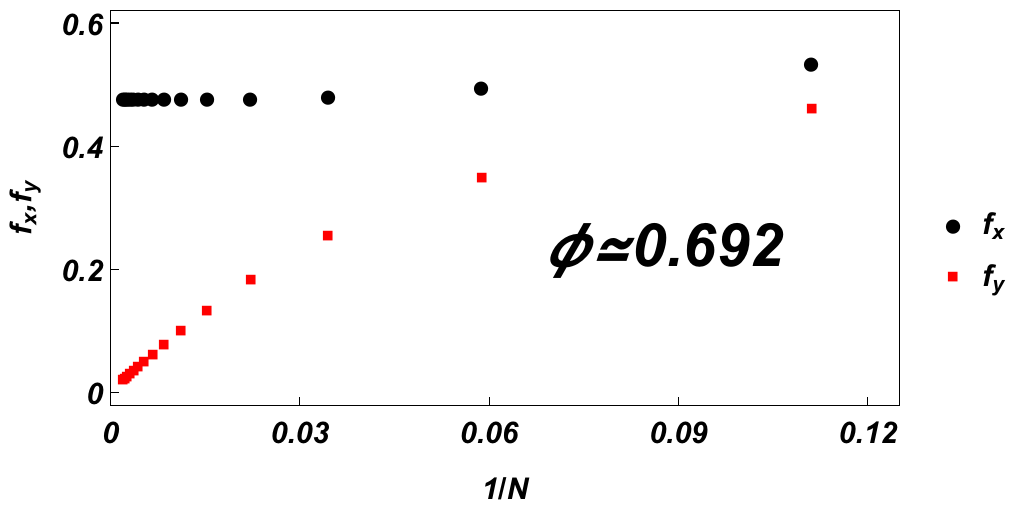}
	\caption{\textbf{Difference in the scaling of the two matrix elements.} Dependence of the two $f_\alpha=|\bra{p}\sigma_N^\alpha\ket{p'}|$ on the inverse of the size of the system $1/N$ for $\phi\simeq0.692$. The black points represent the values obtained for $f_x$ while the red squares stand for $f_y$.}
	\label{fig3}
\end{figure}

Summarizing, we have proved how, in the presence of FBC, the Hamiltonian in eq.~\eqref{Hamiltonian} shows a quantum phase transition for $\phi=0$.
Such transition is absent both for OBC and for systems with PBC made of an even number of spins.
This quantum phase transition separates two different gapless, non-relativistic phases that, even at a finite size, are characterized by different values of ground-states degeneracy: one shows a two-fold degenerate ground-state, while in the second we have a four-fold degenerate one.
This difference, together with the fact that the first derivative of the ground-state energy shows a discontinuity in correspondence with the change of degeneracy,
supports the idea that there is a first-order transition.

The two phases display the two ways in which the system can adjust to the conflict between the local AFM interaction and the global FBC: either by displaying a mesoscopic ferromagnetism, whose magnitude decays to zero with the system size~\cite{Maric2019}, or through an approximate staggerization, so that the phase difference between neighboring spins is $\pi \left(1 \pm \frac{1}{N} \right)$.
For large systems, these $1/N$ corrections induced by frustration are indeed negligible at short distances.
However, they become relevant when fractions of the whole chain are considered. 
Crucially, the latter order spontaneously breaks translational invariance and remains finite in the thermodynamic limit. 
Let us remark once more that, with different boundary conditions, all these effects are not present.

The results presented in this work are much more than an extension of~\cite{Maric2019}, in which we already proved that FBC can affect local order. 
While in~\cite{Maric2019} AFM was destroyed by FBC and replaced with a mesoscopic ferromagnetic order, here we encounter an AFM order, which spontaneously breaks translational invariance, is modulated in an incommensurate way, and does not vanish in the thermodynamic limit. Most of all, the transition between these two orders is signaled by a discontinuity in the derivative of the free energy, indicating a first-order quantum phase transition.

The phase transition we have found resembles several well-known phenomena of quantum complex systems, without being completely included in any of them. A finite difference of the values of the free energy derivative at two sides of the transition characterizes also first-order wetting transitions~\cite{Diehl1986,Bonn2001,Bonn2009}, that are associated to the existence of a border. On the other hand, in our system, we cannot individuate any border, since the chain under analysis is perfectly invariant under spatial translations. Delocalized boundary transitions have already been reported and are called “interfacial wetting”, but they differ from the phenomenology we discussed here, as they refer to multi-kink states connecting two different orders (prescribed at the boundary) separated by a third intermediate state \cite{Delfino2016}.

The transition we have found, and the incommensurate AFM order, might also be explored experimentally. To observe them, one could, for example, measure the magnetization at different positions in the ring. In the phase exhibiting incommensurate AFM order, the measurements will yield different values at different positions, while in the other phase, exhibiting mesoscopic ferromagnetic order the values are going to be the same. One could also examine the maximum value of the magnetization over the ring. In the incommensurate AFM phase this value is finite, while in the other it goes to zero in the thermodynamic limit. The maximum of the magnetization over the ring thus exhibits a jump at the transition point.

The strong dependence of the macroscopic behavior on boundary conditions that we have found seemingly contradicts one of the tenants of Landau Theory and we cannot offer at the moment a unifying picture that would reconcile our results with the general theory. 
Indeed, FBC are special, as the kind of spin chains we consider are the building blocks of every frustrated system~\cite{Toulouse1977,Vannimenus1977,Wolf2003,Giampaolo2011,Marzolino2013,Giampaolo2015_2}
, which are known to present peculiar properties. We can also speculate that FBC induce a topological effect that puts the system outside the range of validity of Landau's theory. In fact, while in the ferromagnetic phases of the model the ground state degeneracy in the thermodynamic limit is independent of boundary conditions, in the parameter region exhibiting incommensurate AFM order the degeneracy is doubled with FBC, thus clearly depending on the (real space) topology of the system.  But, there is a second more subtle connection. 
Indeed, while magnetic phases show symmetry-breaking order parameters, topological phases are characterized by the expectation value of a non-local string operator that does not violate the bulk symmetry of the system. In our system, as we have shown before, the value of the local magnetization is associated with the expectation value of the operator $\sigma_N^x\Pi^x=\bigotimes_{j=1}^{N-1} \sigma_j^x$,  which is a string operator that does not break the parity symmetries of the model. However, while geometrical frustration induces some topological effects in the XY chain, interestingly, we have found evidence that suggests that topological phases are resilient to geometrical frustration \cite{Maric20_3}.

A natural question that emerges is how robust is the observed phenomenology to defects, that destroy the translational symmetry of the model. In fact, a common expectation is that such defect would pin the domain wall and restore the unfrustrated physics in the bulk. This question has been addressed in~\cite{Torre2020}, where it has been shown that a complex picture emerges depending on the nature of the defects, but that ultimately the incommensurate AFM order can survive under very general conditions. Thus, the physics we have discussed in this work is not only a remarkable point of principle but also a physically measurable phenomenon.

\section{Methods}

\subsection{Ground state degeneracy:}

We have two different pictures depending on the sign of $\phi$.
For $\phi<0$ the excitation energy, given  by eq.~\eqref{energy_momenta}, admits two equivalent local minima, one for each parity, i.e.  \mbox{$q=0\in\Gamma^-$} and $q=\pi\in\Gamma^+$.
Consequently, the ground state is two-fold degenerate, and the two ground states that are also eigenstates of $\Pi^z$ are $\ket{g_0^-}=a^\dagger_0 \ket{0^-}$ and $\ket{g_0^+}=\Pi^x \ket{g_0^-}=\ket{0^+}$, where the last equality holds up to a phase factor. On the contrary, when $\phi$ becomes positive, the energy in eq.~\eqref{energy_momenta} admits, for each $z$-parity sector, two local minima at opposite momenta, $\pm p \in\Gamma^-$ and $\pm p'\in\Gamma^+$, where $p=\frac{\pi}{2} \left( 1 - \frac{1}{N} \right)$ for a system size $N$ satisfying $N\textrm{ mod }4=1$, $p=\frac{\pi}{2} \left( 1 + \frac{1}{N} \right)$ for $N\textrm{ mod }4=3$ and $p'=\pi-p$.

\subsection{Spatial dependence of the magnetization:}
To study the spatial dependence of the magnetization it is useful to introduce the unitary lattice translation operator $T$, whose action shifts all the spins by one position in the lattice as
\begin{equation}\label{T spins}
T^\dagger\sigma_j^\alpha T=\sigma_{j+1}^\alpha , \quad \alpha=x,y,z
\end{equation}
and which commutes with the system's Hamiltonian in eq.~\eqref{Hamiltonian}, i.e. $[H,T]\!=\!0$.
The operator $T$ admits, as a generator, the momentum operator $P$, i.e. $T=e^{\imath P}$. Among the eigenstates of $P$, we have the ground state vectors  $\ket{\pm p}$ and  $\ket{\pm p'}$ with relative eigenvalues equal to $\pm p$ and $\pi\pm p'=\mp p$. A detailed definition of the operator and a proof of these properties is given in Supplementary Note 3. The latter equality allows to identify the ground states $a_{\pm p'}^\dagger a_\pi^\dagger\!\ket{0^+}$ with the states $\Pi^x\ket{\mp p}$. 

We can exploit the properties of the operator $T$ to determine, for each odd $N$, the spatial dependence of the magnetizations along $x$ and $y$ in the ground state $\ket{\tilde{g}}$ ($\langle \sigma_j^\alpha \rangle_{\tilde{g}}$ with $\alpha=x,\,y$), defined in eq.~\eqref{smartgs}. In fact, taking into account that $\ket{p}$ and $\ket{p'}$ live in two different $z$-parity sectors, we have that the magnetization along a direction orthogonal to $z$ on the state $\ket{\tilde{g}}$ is given by 
\begin{equation}\label{step magnetization first}
\!\!\langle \sigma_j^\alpha \rangle_{\tilde{g}}\!=\!\bra{\tilde{g}}\sigma_j^\alpha\ket{\tilde{g}}\!=\!\frac{1}{2}\big(  e^{i\theta}\bra{p}\sigma_j^\alpha\ket{p'}\!+ e^{-i\theta}\bra{p'}\sigma_j^\alpha\ket{p}\big).
\end{equation}
The magnetization is determined by the spin operator matrix elements $\bra{p}\sigma_j^\alpha\ket{p'}$, that can all be related to the ones at the site $j=N$. In fact, considering eq.~\eqref{T spins} we obtain
\begin{equation}\label{matrix element site j}
\bra{p}\sigma_j^\alpha\ket{p'}=e^{-\imath 2 p j}\bra{p}\sigma_N^\alpha\ket{p'} \ .
\end{equation}

The advantage of this representation is that the matrix element $\bra{p}\sigma_N^\alpha\ket{p'}$ is a real number for $\alpha=x$, and a purely imaginary one for $\alpha=y$, making it simple to express the magnetization. Let us illustrate the computation of the $x$ magnetization, while the details for the $y$ magnetization can be found in Supplementary Note 5. The special role of the site $N$ is singled out by the choice made in the construction of the states through the Jordan-Wigner transformation. To prove that the matrix element is real it is useful to introduce the, unitary and hermitian, mirror operator with respect to site $N$, denoted as $M_N$, that makes the mirroring
\begin{equation}
M_N\sigma_j^\alpha M_N=\sigma_{-j}^\alpha, \quad \alpha=x,y,z,
\end{equation}
and, in particular, leaves the $N$-th site unchanged. The operator satisfies $M_N\ket{\pm p}=\ket{\mp p}$, while the reflections with respect to other sites would introduce additional phase factors. A detailed definition of the mirror operators and discussion of their properties is given in Supplementary Note 4. Exploiting the properties of $M_N$ we have then
\begin{equation}
\bra{p}\sigma_N^x\Pi^x \ket{-p}=\bra{-p}\sigma_N^x\Pi^x\ket{p}=(\bra{p}\sigma_N^x\Pi^x\ket{-p})^*,
\end{equation}
so $\bra{p}\sigma_N^x\ket{p'}$ is real. Evaluating $\bra{p}\sigma_N^x\ket{p'}$ using the methods of the next paragraph we can see that the quantity is actually positive, and therefore equal to its magnitude $f_x$. Then from eq.~\eqref{step magnetization first} and \eqref{matrix element site j} we get the spatial dependence of the magnetization 
\begin{equation}
\braket{\sigma_j^x}_{\tilde{g}}=\cos(2pj-\theta)\bra{p}\sigma_N^x\ket{p'}.
\end{equation}
Inserting the exact value of the momentum we get eq.~\eqref{spatial} for $\alpha=x$, where the exact value of $\lambda(x,\theta,N)$ is given in Supplementary Note 5.

\subsection{Scaling of the magnetization with \textit{N}:}
The magnetization is determined by the matrix elements $f_\alpha=|\bra{p}\sigma_N^\alpha\ket{p'}|$. To evaluate them we exploit the trick introduced in~\cite{Maric2019} and used to compute the magnetization.

Within the ground state manifold, we define the vectors
\begin{equation}
\label{smartgs2}
\ket{g_\pm} \equiv \frac{1}{\sqrt{2}}(\ket{p} \pm \ket{-p}  ) ,
\end{equation}
and, further using the, already introduced, properties of the mirror operator $M_N$ (see Supplementary Note 6 for details), we get
\begin{eqnarray}
\!\bra{p}\sigma_N^\alpha\ket{p'} \!&\! =\! & \frac{1}{2}\Big(\! \bra{g_+}\sigma_N^\alpha\Pi^x\ket{g_+}\!-\!\bra{g_-}\sigma_N^\alpha\Pi^x\ket{g_-}\!\Big) \ .
\nonumber \\
\end{eqnarray}
In this way, we represent a notoriously hard one point function in terms of standard expectation values of products of an even number of spin operators $\sigma_N^\alpha\Pi^x$, which can be expressed as a product of an even number (parity preserving) of fermionic operators. Using Wick's theorem, the expectation values can then be expressed as determinants and evaluated numerically efficiently (see Supplementary Note 6).

Moreover, in the limit $\phi\to 0^+$ the matrix elements can also be evaluated analytically using a perturbative approach (see Supplementary Note 7).

\section*{Acknowledgments}
We thank Giuseppe Mussardo, Rosario Fazio, and Marcello Dalmonte for useful discussions and suggestions.
We acknowledge support from the European Regional Development Fund -- the Competitiveness and Cohesion Operational Programme (KK.01.1.1.06 -- RBI TWIN SIN) and from the Croatian Science Foundation (HrZZ) Projects No. IP--2016--6--3347 and IP--2019--4--3321.
SMG and FF also acknowledge support from the QuantiXLie Center of Excellence, a project co--financed by the Croatian Government and European Union through the European Regional Development Fund -- the Competitiveness and Cohesion (Grant KK.01.1.1.01.0004).

\appendix

\begin{widetext}

\section{The model and its symmetries}

The XY chain studied in the letter is given by the Hamiltonian
\begin{equation}\label{Hamiltonian}
H=\sum\limits_{j=1}^N \Big( \cos\phi \ \sigma_j^x\sigma_{j+1}^x  +\sin\phi \ \sigma_j^y\sigma_{j+1}^y\Big) \; ,
\end{equation}
where $\sigma_j^\alpha$, with $\alpha=x,y,z$, are Pauli operators acting on the $j$-th spin, $N$ is the number lattice sites and we assume frustrated boundary conditions (FBC), given by periodic boundary conditions $\sigma_j^\alpha=\sigma_{j+N}^\alpha$ and an odd number of lattice sites. 
In these supplementary materials we will focus on the region $\phi\in(0,\pi/4)$, where both of the two interactions are antiferromagnetic.
We also compare the results obtained for this region with the one analyzed in Ref.~\cite{Maric2019}, which, keeping $\phi\in(-\pi/4,0)$, describes the situation where one dominant, antiferromagnetic coupling appears together with a ferromagnetic smaller one.

Since the model in eq.~\eqref{Hamiltonian} does not include an external magnetic field, the Hamiltonian 
commutes with all three parity operators $\Pi^\alpha \equiv \bigotimes_{j=1}^{N} \sigma_j^\alpha, \ \alpha=x,y,z$, i.e. $[H,\Pi^\alpha]=0 ,\ \forall \alpha$.
However, assuming FBC and hence setting the number of sites to be an odd number, different parity operators anticommute, satisfying $ \left\{ \Pi^\alpha, \Pi^\beta \right\} = 2 \delta_{\alpha,\beta}$.
The fact that the different parity operators anticommute has an immediate relevant consequence: each eigenstate is at least two-fold degenerate. 
To explain this point, let us assume that $\ket{\varphi}$ is simultaneously an eigenstate of $H$ and one of the three parity operators, for instance $\Pi^z$. 
Then, the image of $\ket{\varphi}$ under the action of one of the other parity operators, for example $\Pi^x\ket{\varphi}$, is still an eigenstate of both $H$ and $\Pi^z$.
But while $\ket{\varphi}$ and $\Pi^x\ket{\varphi}$ have the same energy, they have different $z$ parity.
As a consequence, for each eigenstate of the Hamiltonian in the even sector of one of the parities ($\Pi^\alpha=1$), there will be a second eigenstate of the Hamiltonian, with the same energy but living in the odd sector ($\Pi^\alpha=-1$).
Hence each eigenvalue of the Hamiltonian is, at least, two-fold degenerate.

However, other symmetry properties of the Hamiltonian will prove to be of extreme relevance in the following.
At first, due to periodic boundary conditions, the model exhibits exact translational symmetry, which is expressed in the commutation of the Hamiltonian with the lattice translation operator $T$. 
Finally, the model also exhibits mirror symmetry with respect to any lattice site. Namely, for any lattice site $k$ the Hamiltonian in eq.~\eqref{Hamiltonian} is invariant under the mirror image with respect to it, achieved by the transformation $j\to 2k-j$ on spins, associated to the action of the mirror operator $M_k$.

\section{Exact solution}

As it is well-known, the model in eq.~\eqref{Hamiltonian} can be diagonalized exactly, using standard techniques of mapping spins to fermions~\cite{Franchini2017}. 
The Jordan-Wigner transformation defines the fermionic operators as
\begin{equation}
c_j=\Big(\bigotimes\limits_{l=1}^{j-1}\sigma_l^z\Big)\otimes\sigma_{j}^+  , \quad c_j^\dagger=\Big(\bigotimes\limits_{l=1}^{j-1}\sigma_l^z\Big)\otimes\sigma_{j}^-,
\end{equation}
where $\sigma_j^\pm=(\sigma_j^x\pm \imath\sigma_j^y)/2$ are spin raising and lowering operators. In this notation, not explicitly mentioning a lattice site in the tensor product corresponds to making a tensor product with an identity operator on that site. In terms of Jordan-Wigner fermionic operators, the Hamiltonian in eq.~\eqref{Hamiltonian} reads as
\begin{eqnarray}
\label{supp_Hamiltonian_1}
H\!&\!=\!&\! \sum\limits_{j=1}^{N-1}\!\big[ (\sin\phi-\cos\phi) c_jc_{j+1}\!-\!
(\cos\phi+\sin\phi)c_jc_{j+1}^\dagger +\textrm{h.c.} \big]\! -\! \Pi^z \big[(\sin\phi-\cos\phi) c_Nc_{1}\!-\!
(\cos\phi+\sin\phi)c_Nc_{1}^\dagger+\textrm{h.c.} \big]
\end{eqnarray}
Due to the presence of the parity operator along $z$, the Hamiltonian given by eq.~\eqref{supp_Hamiltonian_1} is not in a quadratic form, but becomes quadratic in each of the two parity sector of $\Pi^z$, i.e.
\begin{equation}
\label{supp_Hamiltonian_2}
H=\frac{1+\Pi^z}{2}H^+ \frac{1+\Pi^z}{2} + \frac{1-\Pi^z}{2}H^- \frac{1-\Pi^z}{2} \; ,
\end{equation}
where both $H^+$ and $H^-$ are quadratic. 
Being quadratic, they can be brought to a form of free fermions, which is done conveniently in two steps. First, $H^\pm$ are written in terms of the Fourier transformed Jordan-Wigner fermions,
\begin{equation}\label{fourier transformed JW fermions}
b_q=\frac{1}{\sqrt{N}}\sum\limits_{j=1}^{N} c_j \ e^{-\imath qj} , \quad b_q^\dagger=\frac{1}{\sqrt{N}}\sum\limits_{j=1}^{N} c_j^\dagger \ e^{\imath qj}  ,
\end{equation}
for $q\in\Gamma^\pm$, where the two sets of quasi-momenta are given by $\Gamma^-=\{2\pi k/N \}$ and $\Gamma^+=\{2\pi (k+\frac{1}{2})/N \}$ with $k$ running on all integers between $0$ and $N-1$. 
Then a Bogoliubov rotation
\begin{equation}\label{Bogoliubov particles}
\begin{split}
&a_q=\cos\theta_q \ b_q + \imath \sin\theta_q \ b_{-q}^\dagger, \quad q\neq0,\pi\\
&a_{q}=b_q, \quad q=0,\pi 
\end{split}
\end{equation}
with a momentum-dependent Bogoliubov angle given by
\begin{equation}\label{arctan}
\theta_{q}\!=\!\arctan \frac{|\sin \phi + \cos \phi \ e^{\imath 2q}|\! -\!(\sin\phi+\cos\phi)\cos q}{(\cos\phi-\sin\phi)\sin q}
\end{equation}
is used to bring them to a form of free fermions. 
We end up with
\begin{equation}
\label{supp_Hamiltonian_3}
H^\pm=\sum\limits_{q\in\Gamma^\pm}^{} \varepsilon(q) \left(a_q^\dagger a_q-\frac{1}{2}\right) ,
\end{equation}
where the dispersion law is given by
\begin{eqnarray}
\label{supp_energy_momenta}
\epsilon(q) &=&2\left| \sin\phi+\cos\phi \ e^{\imath 2q} \right| ,\  q\neq 0, \pi \ , \nonumber \\
\epsilon(0) &=&-\epsilon(\pi)=2(\sin\phi+\cos\phi) \; .
\end{eqnarray}

The eigenstates of $H$ are formed by populating the vacuum states $\ket{0^\pm}$ of Bogoliubov fermions $a_q,\,q\in\Gamma^\pm$, and by taking care of the parity requirements in \eqref{supp_Hamiltonian_2}. 
The parity-dependent vacuum states are given by
\begin{equation}
\label{Bogoliubov vacuum}
\ket{0^\pm}=\prod\limits_{0<q<\pi,\; q\in\Gamma^\pm} \big(\cos\theta_q-\imath\sin\theta_q \ b_q^\dagger b_{-q}^\dagger \big) \ket{0},
\end{equation}
where $\ket{0}\equiv\bigotimes_{j=1}^N\ket{\uparrow_j}$ is the vacuum for Jordan-Wigner fermions, satisfying the relation $c_j\ket{0}=0 \; \forall j$.
As it is easy to see from eq.~\eqref{Bogoliubov vacuum}, the vacuum states $\ket{0^+}$ and $\ket{0^-}$ by construction have even $\Pi^z$ parity.
Since each excitation $a_q^\dagger$ changes the parity of the state it follows that the eigenstates of $H$ belonging to $\Pi^z=-1$ sector are of the form $a_{q_1}^\dagger a_{q_2}^\dagger...a_{q_{m}}^\dagger\ket{0^-}$ with $q_i\in\Gamma^-$ and $m$ odd, while $\Pi^z=+1$ sector eigenstates are of the same form but with $q_i\in\Gamma^+$, $m$ even and the vacuum $\ket{0^+}$ used. 

On the other hand, as we have discussed in the previous section of these supplementary materials, from an eigenstate of one parity of $\Pi^z$ we can, by applying $\Pi^x$, obtain a second eigenstate, with the same energy, but different $\Pi^z$ parity.
This implies that to each aforementioned odd parity state, for instance, there is a corresponding even parity state $\Pi^x a_{q_1}^\dagger a_{q_2}^\dagger...a_{q_{m}}^\dagger\ket{0^-}$ with the same energy. 

In accordance with these facts, and keeping in mind that, as we can see from eq.~\eqref{supp_energy_momenta}, in the range of $\phi$ of our interest there is no momenta in the odd sector with a negative energy, the ground states in the odd parity sector of $\Pi_z$ are constructed by exciting the lowest energy modes $q\in\Gamma^-$ and have the form $a_q^\dagger\ket{0^-}$.
To each such state is associated an equivalent ground state in the even sector of the form $\Pi^x a_q^\dagger\ket{0^-}$. 
Similarly, the lowest lying excited states are obtained by exciting the other single modes. Therefore, the ground state is part of a band of $2N$ state, in which the energy gap between the states is, due to the spectrum of the form eq.~\eqref{supp_energy_momenta}, closing algebraically with the system size. The closing of the gap is a phenomenology analogous to Refs.~\cite{Dong2016,Giampaolo2019,Maric2019}, and is an aspect of geometrical frustration in general.

In the region $\phi\in(-\pi/4,0)$, studied in Ref.~\cite{Maric2019}, the energy in eq.~\eqref{supp_energy_momenta} for the momenta in the odd sector is minimized by $q=0$. So the ground state manifold is two-fold degenerate, spanned by the states $a_0^\dagger\ket{0^-}$ and $\Pi^xa_0^\dagger\ket{0^-}$. 
On the other hand, for $\phi\in(0,\pi/4)$ the energy would be minimized assuming $q=\pm\pi/2$. 
However, for any finite system with odd $N$ the momenta $q=\pm\pi/2$ are not allowed.
As a consequence the modes in the odd sector with the lowest energy, that we denote as $\pm p\in\Gamma^-$, are given by
\begin{equation}\label{momentum p}
p=
\begin{cases}
&\frac{\pi}{2}\left( 1- \frac{1}{N} \right) \; , \quad N\textrm{ mod }4=1\\
&\frac{\pi}{2}\left( 1+ \frac{1}{N} \right) \; , \quad N\textrm{ mod }4=3
\end{cases}
\end{equation}
Hence the two states $\ket{\pm p}= a_{\pm p}^\dagger\ket{0^-}$ represent the two ground states in the odd parity sector. The ground state manifold is, therefore, four-fold degenerate and a generic ground state can be written as a superposition
\begin{equation}\label{ground state superposition}
\ket{g}=u_1\ket{p}+u_2\ket{-p}+u_3 \ \Pi^x\ket{-p}+u_4 \ \Pi^x\ket{p} \; ,
\end{equation}
where we have assumed that the normalization condition $\sum_i\left|u\right|^2=1$ is satisfied.

\section{The Translation Operator}

The lattice translation operator $T$ is a linear operator that shifts cyclically all the spins in the lattice by one site. 
To define it, we choose a basis of the space and specify its action on the basis.
One basis of the Hilbert space of $N$ spins are the states
\begin{equation}\label{basis}
\ket{\psi}=\bigotimes_{k=1}^N(\sigma_k^-)^{n_k}\ket{\uparrow_k} \ ,
\end{equation}
where $n_1,n_2,...,n_N\in\{0,1\}$. 
The translation operator $T$ can then be defined by
\begin{equation}
\label{trasl_op_1}
T \ket{\psi}=\bigotimes_{k=1}^N(\sigma_k^-)^{n_{k+1}}\ket{\uparrow_k} \ , 
\end{equation}
where we make the identification $n_{N+1}\equiv n_{1}$.
From eq.~\eqref{trasl_op_1} it follows immediately that, for each state $\ket{\psi}$, we have that $\bra{\psi}T^\dagger T\ket{\psi}=1$. 
Hence the translation operator is unitary, i.e. $T^\dagger T=\mathbb{1}$
and the adjoint $T^\dagger$ plays the role of the translation operator in the other direction.
Moreover, applying the $T$ operator $N$ times translates the spins by the whole lattice and results in recovering the initial state, implying the idempotence of order $N$ of $T$, i.e. $T^N=\mathbb{1}$.
As a consequence, the only possible eigenvalues of the translation operator are the $N$-th roots of unity, given by $e^{\imath q}, q\in\Gamma^-$.

On the other hand, moving from the spin states to the operators, it is easy to see that the translation operator shifts the Pauli operators as
\begin{equation}\label{T spins}
T^\dagger\sigma_j^\alpha T=\sigma_{j+1}^\alpha , \quad \alpha=x,y,z \; ,
\end{equation}
where $\sigma_{N+1}^\alpha=\sigma_1^\alpha$, and, consequently it commutes with both the Hamiltonian in eq.~\eqref{Hamiltonian}  ($[T,H]=0$) and the parity operators \mbox{($[T,\Pi^\alpha]=0$ for $\alpha=x,y,z$)}.

The fact that the Hamiltonian and the translation operator commute implies that they admit a complete set of common eigenstates. In the following we prove that such a complete set is made by the eigenstates introduced in the previous section. Let us start by proving the following theorem.

\begin{thm}\label{tm 1}
	\begin{minipage}[t]{0.85\columnwidth}
		\begin{enumerate}[label=(\alph*)]
			\item The states $b_{q_1}^\dagger b_{q_2}^\dagger...b_{q_m}^\dagger\ket{0}$, with $m$ odd and $\{q_k\}\subset\Gamma^-$, are eigenstates of $T$ with eigenvalue equal to $\exp\big[\imath \sum_{k=1}^m q_k \big]$. \label{part a}
			\item The states $b_{q_1}^\dagger b_{q_2}^\dagger...b_{q_m}^\dagger\ket{0}$, with $m$ even and $\{q_k\}\subset\Gamma^+$, are eigenstates of $T$ with eigenvalue equal to $\exp\big[\imath \sum_{k=1}^m q_k \big]$. \label{part b}
		\end{enumerate}
	\end{minipage}
\end{thm}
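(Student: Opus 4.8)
My plan is to work directly in the occupation-number basis, on which $T$ acts by a plain cyclic relabeling of the sites (eq.~\eqref{trasl_op_1}), and to avoid conjugating the Jordan--Wigner fermions by $T$: the Jordan--Wigner string does not translate rigidly, so that route produces awkward nonlocal factors, whereas the basis-state action is transparent.

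First I would record the trivial fact $T\ket{0}=\ket{0}$ --- the all-up configuration is fixed by the shift --- which already settles the case $m=0$ of part~\ref{part b}. For $m\ge1$, I would expand, using eq.~\eqref{fourier transformed JW fermions},
\begin{equation*}
b_{q_1}^\dagger\cdots b_{q_m}^\dagger\ket{0}=N^{-m/2}\sum_{j_1,\dots,j_m=1}^{N}e^{\imath(q_1 j_1+\cdots+q_m j_m)}\;c_{j_1}^\dagger\cdots c_{j_m}^\dagger\ket{0}\ ,
\end{equation*}
and note that $c_{j_1}^\dagger\cdots c_{j_m}^\dagger\ket{0}$ equals, up to the sign of the permutation that sorts $(j_1,\dots,j_m)$ into increasing order, the basis state with down spins exactly at the sites $\{j_i\}$ (and vanishes if two indices coincide). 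By the same relabeling rule, $T$ maps that basis state to the one with down spins at $\{j_i\ominus1\}$, where $\ominus$ is subtraction modulo $N$ with representatives in $\{1,\dots,N\}$. Rewriting the image in fermionic language then gives
\begin{equation*}
T\bigl(c_{j_1}^\dagger\cdots c_{j_m}^\dagger\ket{0}\bigr)=(-1)^{(m-1)\,\chi(j_1,\dots,j_m)}\;c_{j_1\ominus1}^\dagger\cdots c_{j_m\ominus1}^\dagger\ket{0}\ ,
\end{equation*}
with $\chi=1$ if some $j_i$ equals $1$ --- so that one excitation wraps from site $1$ around to site $N$ --- and $\chi=0$ otherwise; the extra sign $(-1)^{m-1}$ is the signature of the $m$-cycle that carries the wrapped index from the front of the sorted list to its back.

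Inserting this into the Fourier sum and reindexing by $j_i'=j_i\ominus1$, the non-wrapping configurations reproduce, after factoring out $e^{\imath\sum_k q_k}$, exactly the terms of $e^{\imath\sum_k q_k}\,b_{q_1}^\dagger\cdots b_{q_m}^\dagger\ket{0}$ with all $j_i'\le N-1$; the wrapping configurations must supply the remaining terms, those with one $j_i'=N$, but they do so with a mismatch $(-1)^{m-1}e^{\imath q_{i_0}N}$ relative to the target, where $q_{i_0}$ is the momentum carried by the wrapped index and $e^{\imath q_{i_0}N}$ comes from comparing the Fourier weight $e^{\imath q_{i_0}}$ at site $1$ with $e^{\imath q_{i_0}N}$ at site $N$. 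Since $e^{\imath qN}=+1$ for $q\in\Gamma^-$ and $e^{\imath qN}=-1$ for $q\in\Gamma^+$, this mismatch is $1$ precisely when $m$ is odd and all momenta lie in $\Gamma^-$, or $m$ is even and all momenta lie in $\Gamma^+$, i.e.\ under the hypotheses of \ref{part a} and \ref{part b}, respectively. In those cases the two families of terms reassemble into $e^{\imath\sum_k q_k}\,b_{q_1}^\dagger\cdots b_{q_m}^\dagger\ket{0}$, which is the claim.

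The one delicate point is the fermionic sign bookkeeping in the formula for $T\bigl(c_{j_1}^\dagger\cdots c_{j_m}^\dagger\ket{0}\bigr)$: one must keep straight the Jordan--Wigner string signs hidden in $c_{j_1}^\dagger\cdots c_{j_m}^\dagger\ket{0}$, the reordering sign needed to present the shifted state with its indices in the original order, and the single wrap-around transposition. Once these are pinned down the remainder is a routine Fourier resummation, and it is exactly the $m$-parity of the wrap-around sign, set against the value of $e^{\imath qN}$ on $\Gamma^{\pm}$, that produces the dichotomy between the two statements.
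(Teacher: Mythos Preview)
Your proof is correct and follows essentially the same route as the paper's: expand the state in the Fourier/Jordan--Wigner basis, pass to the occupation-number (spin) basis where $T$ acts by a pure cyclic shift, track the single permutation sign $(-1)^{m-1}$ that appears when an excitation wraps from site $1$ to site $N$, and observe that this sign is compensated by $e^{\imath qN}=\pm1$ on $\Gamma^\mp$ exactly under the stated parity hypotheses. The only cosmetic differences are that you package the two cases \ref{part a} and \ref{part b} into a single mismatch factor $(-1)^{m-1}e^{\imath q_{i_0}N}$ (the paper proves \ref{part a} in detail and declares \ref{part b} analogous), and you phrase the wrap-around sign as the signature of an $m$-cycle rather than counting $m-1$ transpositions; these are the same computation.
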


\begin{proof}
	
	We write $\prod_{k=1}^m b_{q_k}^\dagger$ to indicate the ordered product of fermionic operators $b_{q_1}^\dagger b_{q_2}^\dagger...b_{q_m}^\dagger$. From the defining properties of $T$ we know how it acts on spin states and how it transforms the spin operators. Hence to study its action on the fermionic states $\left(\prod_{k=1}^m b_{q_k}^\dagger\right)\ket{0}$ it is convenient to write them in terms of spin states.
	This can be done in two steps.
	At first, using the eq.~\eqref{fourier transformed JW fermions}, we can write our state in terms of the Jordan-Wigner fermions, obtaining
	\begin{equation}\label{psi q 1}
	\!\!\!\!\left(\prod_{k=1}^m b_{q_k}^\dagger\right)\ket{0}=\frac{1}{N^{m/2}}
	\!\!\!\!\!\!\!\sum_{j_1,\ldots,j_m=1}^N\!\!\!\!\!\!\!
	e^{\imath \sum_{k=1}^m q_k j_k} \prod_{k=1}^m \left( c_{j_k}^\dagger \right)
	\ket{0} \ .
	\end{equation}
	Being the $c_{j_k}^\dagger$ operators fermionic, only the terms with all different $j_k$ survive.
	The second step is to invert the Jordan-Wigner mapping to bring back the fermionic states to spin ones.
	To do this step we first sort the fermionic operators, after which it's easy to invert the Jordan-Wigner transformation. To provide an example we have
	\begin{equation}
	c_1^\dagger c_4^\dagger c_2^\dagger\ket{0}=-c_1^\dagger c_2^\dagger c_4^\dagger \ket{0} = -\sigma_1^-   (\sigma_1^z)\sigma_2^-  (\sigma_1^z \sigma_2^z \sigma_3^z) \sigma_4^- \bigotimes_{k=1}^N\ket{\uparrow_k} = -\sigma_1^-   \sigma_2^-  \sigma_4^- \bigotimes_{k=1}^N\ket{\uparrow_k} \ .
	\end{equation}
	More generally we can write
	\begin{equation}\label{fermions spins}
	\bigotimes_{k=1}^m \left( c_{j_k}^\dagger \right) \ket{0}=
	S[\{j_k\}] \bigotimes_{k=1}^m  \left( \sigma_{j_k}^- \right) \bigotimes_{k'=1}^N  \ket{\uparrow_{k'}} \ ,
	\end{equation}
	where $S[\{j_k\}] $ is the sign of the permutation that brings the tuple $\{j_k\}$ to normal order. 
	Hence, the states \eqref{psi q 1} can be re-written in terms of spin operators as
	\begin{eqnarray}
	\left(\prod_{k=1}^m b_{q_k}^\dagger\right)\ket{0}&=& \frac{1}{N^{m/2}}\sum_{j_1,\ldots,j_m=1}^N S[ (j_k)]
	e^{\imath \sum_{k=1}^m q_k j_k} \bigotimes_{k=1}^m  \left( \sigma_{j_k}^- \right) \bigotimes_{k'=1}^N  \ket{\uparrow_{k'}} \ . \nonumber
	\end{eqnarray}
	Having the representation of the state in terms of spins, it is easy to see what is the result of the application of $T$. 
	Using its discussed properties and taking into account that that $T$ leaves the state $\bigotimes_{k'=1}^N  \ket{\uparrow_k'}$ unchanged we recover
	\begin{eqnarray}
	T\left(\prod_{k=1}^m b_{q_k}^\dagger\right)\ket{0}&=& \frac{1}{N^{m/2}}\sum_{j_1,\ldots,j_m=1}^N S[\{j_k\}]
	e^{\imath \sum_{k=1}^m q_k j_k} \bigotimes_{k=1}^m  \left( \sigma_{j_k-1}^- \right) \bigotimes_{k'=1}^N  \ket{\uparrow_{k'}}\nonumber\\
	&=& \frac{e^{\imath \sum_{k=1}^m q_k}}{N^{m/2}}\sum_{j_1,\ldots,j_m=1}^N S[\{j_k\}]
	e^{\imath \sum_{k=1}^m q_k (j_k-1)} \bigotimes_{k=1}^m  \left( \sigma_{j_k-1}^- \right) \bigotimes_{k'=1}^N  \ket{\uparrow_{k'}} \ .
	\end{eqnarray}
	
	Let us focus now on part \ref{part a} of the Theorem. We have two different cases. If none of the elements in $\{j_k\}$ is equal to $1$, then none of the elements in $\{j_k-1\}$ is equal to zero, and trivially $S[\{j_k\}]=S[\{j_k-1\}]$. On the contrary if one element of $\{j_k\}$ is equal to 1, then $j_k-1$ becomes 0. 
	However, the number $m$ of the elements in $\{j_k\}$ is odd. Hence to move an element from the first to the last place requires an even number $m-1$ of permutations and hence the sign of the permutation $S[\{j_k\}]=S[\{j_k-1\}]$ remains the same if we replace $j_k-1=0$ with $N$. From this and the fact that, since $\{q_k\}\subset\Gamma^-$, the exponential $e^{\imath q_k(j_k-1)}$ remains the same if we replace $j_k-1=0$ with $N$, it follows that we can write
	\begin{eqnarray}
	\label{T psi 1}
	T\left(\prod_{k=1}^m b_{q_k}^\dagger\right)\ket{0}
	&=& \frac{e^{\imath \sum_{k=1}^m q_k}}{N^{m/2}}\sum_{j_1,\ldots,j_m=1}^N S[\{j_k-1\}]
	e^{\imath \sum_{k=1}^m q_k (j_k-1)} \bigotimes_{k=1}^m  \left( \sigma_{j_k-1}^- \right) \bigotimes_{k'=1}^N  \ket{\uparrow_{k'}} \ ,
	\end{eqnarray}
	where, if for some $k$ we have $j_k-1=0$, we can identify it with $j_k-1=N$. Because of this identification it's easy to write each term in the sum in terms of fermions:
	\begin{eqnarray}
	\label{T psi 2}
	T\left(\prod_{k=1}^m b_{q_k}^\dagger\right)\ket{0}
	&=& \frac{e^{\imath \sum_{k=1}^m q_k}}{N^{m/2}}\sum_{j_1,\ldots,j_m=1}^N 
	e^{\imath \sum_{k=1}^m q_k (j_k-1)} \prod_{k=1}^m \left( c_{j_k-1}^\dagger \right)  \ket{0} \ .
	\end{eqnarray}
	In eq.~\eqref{T psi 2} we can, again on the basis of the identification of $0$ with $N$, rename the indices to get
	\begin{eqnarray}
	\label{T acting on psi 0}
	T\left(\prod_{k=1}^m b_{q_k}^\dagger\right)\ket{0}
	&=& \frac{e^{\imath \sum_{k=1}^m q_k}}{N^{m/2}}\sum_{j_1,\ldots,j_m=1}^N 
	e^{\imath \sum_{k=1}^m q_k j_k} \prod_{k=1}^m \left( c_{j_k}^\dagger \right)  \ket{0}=\exp\left(\imath \sum_{k=1}^m q_k\right)\left(\prod_{k=1}^m b_{q_k}^\dagger\right)\ket{0} \ ,
	\end{eqnarray}
	which proves part \ref{part a} of Theorem \ref{tm 1}. Part \ref{part b} is proven in a similar way.
\end{proof}

From Theorem \ref{tm 1} it follows immediately, by taking into account the definition of the Bogoliubov particles in eq.~\eqref{Bogoliubov particles}, the definition of the Bogoliubov vacua in eq.~\eqref{Bogoliubov vacuum}, and the linearity of the translation operator, that also the Hamiltonian eigenstates $\left(\prod_{k=1}^m a_{q_k}^\dagger\right)\ket{0^\pm}$ are eigenstates of $T$ with eigenvalues equal to $\exp\left(\imath \sum_{k=1}^m q_k\right)$.

\section{The Mirror Operator}

As we have seen in the first section of these supplementary materials, the Hamiltonian is invariant under the mirror transformation with respect to a generic site $k$ that changes spin operators defined on the site $j$ to ones defined on the site $2k-j$. Note that, with the odd number N of sites we work with, in a circular geometry, the line of mirror reflection crosses a site and a bond. Hence, only site $k$ remains unchanged by the mirror action.

As we have done for translations, the mirror transformation can also be expressed by the action of a suitable operator. The mirror operator $M_k$, that makes the mirror transformation of the states with respect to the $k$-th site, is defined by its action on the spin basis states $\ket{\psi}$, defined in eq.~\eqref{basis}, as
\begin{equation}
\label{definition_M}
M_k \ket{\psi}=M_k \bigotimes_{j=1}^N(\sigma_j^-)^{n_j}\ket{\uparrow_j}=
\bigotimes_{j=1}^N(\sigma_j^-)^{n_{2k-j}} \ket{\uparrow_j} \ ,
\end{equation}
where, as always, $n_{j+N}\equiv n_j$. From eq.~\eqref{definition_M} it follows immediately that, for each state $\ket{\psi}$, we have that $\bra{\psi} M_k^\dagger M_k \ket{\psi}=1$. 
Hence, as the translation operator, also $M_k$ is unitary, i.e. $M_k^\dagger M_k=\mathbb{1}$. 
Moreover, applying the mirror operator two times results in recovering the initial state, hence implying the idempotence of order $2$ of the operator $M_k$, i.e. $M_k^2=\mathbb{1}$. 
This implies that $M_k$ is also Hermitian, i.e. $M_k^\dagger=M_k$, and that the only possible eigenvalues of $M_k$ are $\pm 1$.
Moreover, different mirror operators are related by translations, 
\begin{equation}\label{mirror translation}
T^\dagger M_k T=M_{k+1}
\end{equation}
From this relation it is also clear that the mirror operators do not commute with the translation operator ($[M_k,T]\neq0$).

Since each of the mirror operators commutes with the Hamiltonian, the Hamiltonian shares a common basis with each one of them. 
The following theorem gives the relation between the eigenstates we have constructed and the mirror operators. 
Essentially, the mirror operators change the sign of the momenta of the excitations, up to a possible phase factor, depending on $k$. 
Since different mirror operators are related by eq.~\eqref{mirror translation} we focus on the one with $k=N$ for which the phase factor is absent.
\begin{thm}	\label{tm 2}
	\begin{minipage}[t]{0.85\columnwidth}
		\begin{enumerate}[label=(\alph*)]
			\item 	The mirror operator $M_N$ acts on the states $b_{q_1}^\dagger b_{q_2}^\dagger...b_{q_m}^\dagger\ket{0}$, with $m$ odd and $\{q_k\}\subset\Gamma^-$, as
			\begin{equation}
			M_N \ b_{q_1}^\dagger b_{q_2}^\dagger...b_{q_m}^\dagger\ket{0} = b_{-q_m}^\dagger b_{-q_{m-1}}^\dagger...b_{-q_1}^\dagger\ket{0}.
			\end{equation}
			\item The mirror operator $M_N$ acts on the states $b_{q_1}^\dagger b_{q_2}^\dagger...b_{q_m}^\dagger\ket{0}$, with $m$ even and $\{q_k\}\subset\Gamma^+$, as
			\begin{equation}
			M_N \ b_{q_1}^\dagger b_{q_2}^\dagger...b_{q_m}^\dagger\ket{0} = b_{-q_m}^\dagger b_{-q_{m-1}}^\dagger...b_{-q_1}^\dagger\ket{0}.
			\end{equation}
		\end{enumerate}
	\end{minipage}
\end{thm}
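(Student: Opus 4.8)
The plan is to follow the proof of Theorem~\ref{tm 1} almost verbatim, with the cyclic shift $j\mapsto j-1$ replaced by the reflection $j\mapsto -j$ (all site indices modulo $N$, with $0$ identified with $N$), since this is the map that $M_N$ induces on single‑site spin labels through its defining action \eqref{definition_M}. First I would write the state in the spin basis exactly as in the proof of Theorem~\ref{tm 1}, obtaining
\begin{equation}
\Big(\prod_{k=1}^m b_{q_k}^\dagger\Big)\ket{0}=\frac{1}{N^{m/2}}\sum_{j_1,\dots,j_m=1}^N S[\{j_k\}]\,e^{\imath\sum_{k=1}^m q_k j_k}\bigotimes_{k=1}^m\sigma_{j_k}^-\bigotimes_{k'=1}^N\ket{\uparrow_{k'}}\,.
\end{equation}
Acting with $M_N$ replaces each $\sigma_{j_k}^-$ by $\sigma_{-j_k}^-$ and leaves $\bigotimes_{k'}\ket{\uparrow_{k'}}$ untouched. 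I would then change summation variables $j_k\mapsto -j_k$, a bijection of $\{1,\dots,N\}$ that fixes $N$ and reverses the order of $\{1,\dots,N-1\}$; this brings the spin operators back to $\sigma_{j_k}^-$, turns the phase into $e^{-\imath\sum_k q_k j_k}$ up to the factors $e^{\imath q_k N}$, and multiplies the permutation sign $S[\{j_k\}]$ by a controlled power of $-1$. Recognising the result as the spin‑basis expansion of $\prod_{k=1}^m b_{-q_k}^\dagger\ket{0}$ and reversing the order of the $m$ creation operators (at the cost of $(-1)^{m(m-1)/2}$) would give the stated identity. (An equivalent route is to compute $M_N c_j^\dagger M_N$ directly, but it produces awkward $\Pi^z$- and $\sigma_N^z$-dependent factors, so the spin‑basis computation is cleaner.)

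The entire argument reduces to tracking three sources of signs and checking that their product is trivial: (i) the phase $e^{\imath q_k N}$, equal to $+1$ for $q_k\in\Gamma^-$ (part~(a)) and to $-1$ for $q_k\in\Gamma^+$ (part~(b)); (ii) the behaviour of $S[\{j_k\}]$ under the order‑reversing relabelling, which equals $(-1)^{m(m-1)/2}$ when no summation index equals $N$ but differs from this by a factor $(-1)^{m-1}$ when exactly one index equals $N$, because the reflection fixes the site $N$ and so the pairs involving that index keep their relative order; and (iii) the $(-1)^{m(m-1)/2}$ from restoring the reversed order of the creation operators at the end.

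For part~(a), $m$ is odd: the exceptional‑index correction $(-1)^{m-1}$ in (ii) is trivial, the phases in (i) are all $+1$, and (ii) and (iii) multiply to $+1$, yielding the claim exactly as in part~\ref{part a} of Theorem~\ref{tm 1}. For part~(b), $m$ is even: now the $m$ generic indices each contribute a $-1$ from (i) (product $+1$), an exceptional index $j_\ell=N$ contributes no net phase but flips the permutation‑sign count, and a short term‑by‑term check shows that the generic and the exceptional contributions still carry the same overall sign $(-1)^{m(m-1)/2}$, which is in turn undone by (iii). The main obstacle is precisely this last check: ensuring that the exceptional terms — the reflection analogue of the ``$j_k-1=0\equiv N$'' terms in Theorem~\ref{tm 1} — transform with exactly the same net sign as the generic ones once the antiperiodic momenta of $\Gamma^+$ are accounted for. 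As in Theorem~\ref{tm 1}, the parity of $m$ enters here in an essential way, which is why parts~(a) and~(b) must be handled separately even though the computation is otherwise identical.
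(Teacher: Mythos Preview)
Your proposal is correct and follows exactly the strategy the paper indicates (``proven in a similar way as Theorem~\ref{tm 1}, and we omit the details''): expand the state in the spin basis, apply $M_N$ via its action $\sigma_j^-\mapsto\sigma_{-j}^-$, relabel $j_k\mapsto -j_k$, and verify that the three sign sources you list combine to $+1$ in both the generic and the exceptional ($j_k=N$) terms. Your bookkeeping of the phases $e^{\imath q_k N}$ and the permutation signs $S[\{j_k\}]$---including the $(-1)^{m-1}$ correction when one index is fixed at $N$---is accurate, and the cancellation in part~(b) between the missing antiperiodic phase at the exceptional index and the extra permutation-sign flip is precisely the mechanism that makes the even-$m$, $\Gamma^+$ case go through.
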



The theorem is proven in a similar way as Theorem \ref{tm 1}, and we omit the details. 
The other mirror operators $M_k$, with $k\neq N$, would introduce an additional phase factor by acting on the aforementioned eigenstates. 
The phase factor depends on the momentum of the state and can be reconstructed from eq.~\eqref{mirror translation}. 
The $N$-th site being special here is a consequence of its special position in the Jordan-Wigner transformation, which implicitly enters in the definition of the states we work on. 
In the proof of Theorem \ref{tm 2} the $N$-th site is special because for $k=N$ the exponentials of the type $e^{\imath qj}$ can be replaced by $e^{\imath (-q)(2k-j) }$, while for other $k$ a compensating factor has to be introduced. 

Similarly as after Theorem \ref{tm 1}, but using also the property $\theta_{-q}=-\theta_q$ of the Bogoliubov angle, it follows from Theorem \ref{tm 2} that the mirror operator $M_N$ acts on the Hamiltonian eigenstates $a_{q_1}^\dagger a_{q_2}^\dagger...a_{q_m}^\dagger\ket{0^\pm}$ as $M_N \ a_{q_1}^\dagger a_{q_2}^\dagger...a_{q_m}^\dagger\ket{0^\pm} = a_{-q_m}^\dagger a_{-q_{m-1}}^\dagger...a_{-q_1}^\dagger\ket{0^\pm}$. Note that, as a consequence, only the states with the total momentum satisfying $\exp\big[\imath \sum_{j=1}^m q_j\big] = \pm 1$ can simultaneously be the eigenstates of $T$ and $M_N$. Finally, let us notice that mirroring does not change the parity and so the mirror operator commutes with the parity operators, i.e. $[M_N,\Pi^\alpha]=0 , \ \alpha=x,y,z.$


\section{The Spatial Dependence of the Magnetization}

As we have proved in the section about the exact solution of the model, in the region $\phi\in(0,\pi/4)$ the ground state manifold is four fold degenerate. 
Hence a large variety of possible ground states with different magnetic properties can be selected. 
Among them, the ground states at the center of the manuscript to which this supplementary material is attached are of the form
\begin{equation}\label{state Incommensurate}
\ket{\tilde{g}}=\frac{1}{\sqrt{2}}\big(\ket{p}+e^{\imath \theta} \ \Pi^x\ket{-p} \big),
\end{equation}
where $\theta$ is a free phase. 
For such state the magnetization in the $\gamma$ direction, with $\gamma=x,\,y$, shows the peculiar incommensurate antiferromagnetic order that we discussed in the main paper and that we will elaborate on in the following.
By definition, the magnetization in the $\gamma$ direction is equal to
\begin{equation}
\label{step_1_mag}
\braket{\sigma_j^\gamma}_{\tilde{g}}=\frac{1}{2} \big( e^{\imath \theta}\bra{p}\sigma_j^\gamma \Pi^x\ket{-p}+\textrm{c.c.} \big) \ .
\end{equation}
The magnetization is thus determined by the quantities $\bra{p}\sigma_j^\gamma\Pi^x\ket{-p}$, which are matrix elements of the spin string operators $\sigma_j^\gamma$ between the ground states vectors $\ket{p}$ and $\Pi^x\ket{-p}$. The matrix elements at any site $j$ can be related to the ones at site $N$, using the translation operator. Using the relation $\sigma_k^\alpha=(T^\dagger)^{k}\sigma_N^\alpha(T)^{k}$ and knowing the eigenvalues of $T$ we get
\begin{equation}
\bra{p}\sigma_j^\gamma \Pi^x \ket{-p}=e^{-\imath 2pj} \bra{p} \sigma_N^\gamma \Pi^x \ket{-p} \ .
\end{equation}
The advantage of expressing the quantity $\bra{p}\sigma_j^\gamma\Pi^x\ket{-p}$ in terms of the one at site $j=N$ is that this last one is real for $\gamma=x$ and purely imaginary for $\gamma=y$, as we will now show. The reason why the $N$-th site is special is because the Jordan-Wigner transformation, which implicitly enters into the definition of the states, breaks the invariance under spatial translation by identifying a first (and a last) spin in the ring. 

To show that the quantity $\bra{p}\sigma_N^x\Pi^x\ket{-p}$ is real we resort to the mirror operator, which relates the states with opposite momentum as $M_N\ket{p}=\ket{-p}$, according to Theorem \ref{tm 2}. Using this relation and taking into account that $M_N$ is hermitian we get
\begin{equation}
\bra{p}\sigma_N^x \Pi^x\ket{-p}= \bra{-p}M_N\sigma_N^x\Pi^x M_N\ket{p} \ .
\end{equation}
But, as we have said, $\Pi^x$ commutes with the mirror operator, which together with the property $M_N \sigma_N^x M_N =\sigma_N^x$ gives
\begin{equation}
\bra{p}\sigma_N^x \Pi^x\ket{-p}=\bra{-p}\sigma_N^x \Pi^x\ket{p}=\big(\bra{p}\sigma_N^x\Pi^x\ket{-p}\big)^* \ ,
\end{equation}
where the last equality holds because the operator $\sigma_N^x \Pi^x$ is hermitian. Hence $\bra{p}\sigma_N^x\Pi^x\ket{-p}$ is equal to its conjugate and therefore real. To show that $\bra{p}\sigma_N^y \Pi^x\ket{-p}$ is purely imaginary we can use the same method together with the property that $\sigma_N^y\Pi^x$ is antihermitian, or we can use the relation
\begin{equation}
\Pi^x=(-\imath)^N\Pi^y\Pi^z
\end{equation}
and the eigenstate property $\Pi^z\ket{\pm p}=-\ket{\pm p}$, which give
\begin{equation}
\bra{p}\sigma_N^y \Pi^x\ket{-p}=-(-\imath)^N\bra{p}\sigma_N^y\Pi^y\ket{-p}.
\end{equation}
The quantity $\bra{p}\sigma_N^y\Pi^y\ket{-p}$ is real, by the same argument which shows that $\bra{p}\sigma_N^x\Pi^x\ket{-p}$ is real and the factor in front, due to oddity of $N$, makes the whole quantity imaginary.

Taking these properties into account, we get the following spatial dependence for the magnetizations
\begin{align}
\braket{\sigma_j^x}_{\tilde{g}}&=\cos(2pj-\theta)\bra{p}\sigma_N^x \Pi^x\ket{-p} \ , \\
\braket{\sigma_j^y}_{\tilde{g}}&=\cos(2pj-\theta+N\frac{\pi}{2}+\pi)\bra{p}\sigma_N^y \Pi^y\ket{-p} \ .
\end{align}
Inserting the exact value of the momentum \eqref{momentum p}, which is equal to $p=\frac{\pi}{2}+(-1)^{\frac{N+1}{2}}\frac{\pi}{2N}$, we get finally the dependence of the magnetizations on the position in the ring,
\begin{equation}\label{incommensurate magnetization}
\braket{\sigma_j^\gamma}_{\tilde{g}}=(-1)^j\cos\left[\pi\frac{j}{N}+\lambda(\gamma,\theta,N)\right]\bra{p}\sigma_N^\gamma \Pi^\gamma\ket{-p} \ ,
\end{equation}
where
\begin{equation}
\lambda(\gamma,\theta,N) \equiv
\begin{cases}
(-1)^{\frac{N-1}{2}}\theta , \quad & \gamma=x\\
(-1)^{\frac{N-1}{2}}\theta +\frac{\pi}{2}, \quad & \gamma=y 
\end{cases}.
\end{equation}

The magnetization is antiferromagnetic, i.e. staggered, but its magnitude is modulated. Since the number of sites is odd, it is not possible to have every bond aligned antiferromagnetically, but there is necessarily at least a one ferromagnetic one. The magnetization is modulated in such a way to achieve the minimal absolute value at the ferromagnetic bond, thus minimizing the energy. 
The position of this ferromagnetic bond is determined by the phase $\theta$. The position of the ferromagnetic bond of the magnetization in the $x$ direction is shifted by half of the ring with the respect to the ferromagnetic bond of the magnetization in the $y$ direction.

\section{Explicit evaluation of the magnetizations on the $N$-th site}

We can evaluate the magnetization on the $N$-th spin of the lattice exploiting a method similar to the one we developed in Ref.~\cite{Maric2019}. It consists on expressing the matrix elements $\bra{p}\sigma_N^\gamma\Pi^x\ket{-p}$ in terms of expectation values of $\sigma_N^\gamma\Pi^x$ in a definite $\Pi^z$ parity state, using the representation of $\sigma_N^\gamma\Pi^x$ in terms of Majorana fermions
\begin{equation}\label{Majorana fermions definition}
A_j= \Big(\bigotimes\limits_{l=1}^{j-1}\sigma_l^z \Big)\otimes \sigma_j^x \; , \quad B_j= \Big(\bigotimes\limits_{l=1}^{j-1} \sigma_l^z \Big) \otimes\sigma_j^y \; ,
\end{equation}
using Wick's theorem to express the expectation values as a determinant, and finally evaluating the determinant.

We express $\bra{p}\sigma_N^\gamma\Pi^x\ket{-p}$ in terms of expectation values of $\sigma_N^\gamma\Pi^x$ on ground states living in the odd parity sector of $\Pi^z$. 
A general ground state belonging to the odd parity sector of $\Pi^z$ can be written as in eq.~\eqref{ground state superposition} setting $u_3=u_4=0$,
\begin{equation}\label{states single parity}
\ket{u_1,u_2}\equiv u_1\ket{p}+u_2\ket{-p}
\end{equation}
It is immediate to see that
\begin{equation}
\braket{\sigma_j^\gamma\Pi^x}_{u_1=\frac{1}{\sqrt{2}},u_2=\frac{1}{\sqrt{2}}}-\braket{\sigma_j^\gamma\Pi^x}_{u_1=\frac{1}{\sqrt{2}},u_2=-\frac{1}{\sqrt{2}}}=\bra{p}\sigma_j^\gamma\Pi^x\ket{-p}+\bra{-p}\sigma_j^\gamma\Pi^x \ket{p}
\end{equation}
Using the properties of the mirror operator, in the previous section we have shown that $\bra{p}\sigma_N^x \Pi^x \ket{-p}=\bra{-p}\sigma_N^x\Pi^x\ket{p}$, while in an analogous way we have also $\bra{p} \sigma_N^y\Pi^x\ket{-p}=\bra{-p} \sigma_N^y\Pi^x\ket{p}$. Using these relations we get, finally,
\begin{equation}
\bra{p}\sigma_N^\gamma\Pi^x\ket{-p}=\frac{1}{2}\Big(\braket{\sigma_N^\gamma\Pi^x}_{u_1=\frac{1}{\sqrt{2}},u_2=\frac{1}{\sqrt{2}}}-\braket{\sigma_N^\gamma\Pi^x}_{u_1=\frac{1}{\sqrt{2}},u_2=-\frac{1}{\sqrt{2}}} \Big) . \label{FF2 to compute}
\end{equation}

Now, $\sigma_N^x\Pi^x$ and $\sigma_N^y\Pi^x$, being products of spin operators, can be expressed in terms of Majorana fermions, as
\begin{equation}\label{product spins Majoranas}
\sigma_N^x\Pi^x=(-1)^{\frac{N-1}{2}}\prod\limits_{l=1}^{\frac{N-1}{2}} (-\imath A_{2l}B_{2l-1}), \qquad\sigma_N^y\Pi^x=-\imath (-1)^{\frac{N-1}{2}}\left(\prod\limits_{l=1}^{\frac{N-1}{2}} (-\imath A_{2l}B_{2l-1})\right) (-\imath A_{N}B_{N}) \ .
\end{equation}
The expectation values of these operators in a definite $z$ parity ground state can be expressed as a Pfaffian of the matrix of two-point correlators, using Wick's theorem.

To do so, we write the state \eqref{states single parity} as a vacuum state for fermionic operators, in terms of which the Majorana fermions \eqref{Majorana fermions definition} are linear. These fermions are defined by
\begin{equation}
\alpha_p=u_1 a_p^\dagger+u_2a_{-p}^\dagger, \quad \alpha_{-p}=u_2 a_p-u_1a_{-p}
\end{equation}
and by $\alpha_q=a_q$ for $q\neq p, -p$. It's easy to check that the operators $\alpha_q$ satisfy fermionic anticommutation relations and annihilate the state $\ket{u_1,u_2}=\alpha_p\ket{0^-}$, i.e. that we have $\alpha_q\ket{u_1,u_2}=0$ for $q\in\Gamma^-$. Moreover, since the Majorana fermions in eq.~\eqref{Majorana fermions definition} can be written as a linear combination of Bogoliubov fermions $a_q,a_q^\dagger$, they can also be written as a linear combination of fermions $\alpha_q, \alpha_q^\dagger$. In this way, we are able to straightforwardly apply Wick's theorem to evaluate the string operators in eq.~\eqref{product spins Majoranas} over the chose ground state vectors.

The two-point correlators of Majorana fermions are evaluated to be
\begin{eqnarray}
\label{Majorana four fold AA}
\braket{A_jA_l}_{u_1,u_2}= \braket{B_jB_l}_{u_1,u_2}&=&\delta_{jl}
-\frac{2\imath}{N}(|u_1|^2-|u_2|^2)  \sin\big[p(j-l)\Big] , \\
\label{Majorana four fold AB}
-\imath\braket{A_jB_l}_{u_1,u_2}
&=& \frac{1}{N}\sum\limits_{q\in\Gamma^-}e^{\imath 2\theta_{q}}e^{-\imath p(j-l)} -\frac{2}{N} \cos\big[p(j-l)-2\theta_{p}\big] - \frac{2}{N}\Big(u_1^*u_2 \ e^{-\imath p (j+l)} +\textrm{c.c.} , \Big) 
\end{eqnarray}
where the Bogoliubov angle $\theta_q$ is defined in eq.~\eqref{arctan}. The Bogoliubov angle also satisfies
\begin{equation} \label{exp 2 theta}
e^{\imath 2\theta_{q}}=e^{\imath q} \frac{\cos\phi+\sin\phi \ e^{-\imath 2q}}{|\cos\phi+\sin\phi \ e^{-\imath 2q}|}.
\end{equation}
which should be used in eq.~\eqref{Majorana four fold AB} for the mode $q=0$, for which eq.~\eqref{arctan} is undefined.

As a matter of fact, in the evaluation of the matrix elements we encounter only states of the type $\left|u_1\right|=\left|u_2\right|=1/\sqrt{2}$, for which the correlators \eqref{Majorana four fold AA} 
vanish for $j\neq l$. 
This allows us to use the standard approach \cite{Lieb1961} on the basis of Wick's theorem to express the expectation value of \eqref{product spins Majoranas} as a determinant.
For $\braket{\sigma_N^y\Pi^x}_{u_1,u_2}$ we have that 
\begin{equation}
\braket{\sigma_N^y\Pi^x}_{u_1,u_2}=-\imath (-1)^{\frac{N-1}{2}}\det \mathbf{C} \ ,
\end{equation}
with the $(N+1)/2\times (N+1)/2$ correlation matrix $\mathbf{C}$ given by 
\begin{equation}
\mathbf{C}=
\left(
\begin{array}{cccccc}
F(2,1) & F(2,3) & F(2,5) & \cdots & F(2,N-2) & F(2,N) \\ 
F(4,1) & F(4,3) & F(4,5) & \cdots & F(4,N-2) & F(4,N) \\
\vdots & \vdots & \vdots & \ddots & \vdots & \vdots    \\
F(N-1,1) & F(N-1,3) & F(N-1,5) & \cdots & F(N-1,N-2) & F(N-1,N) \\
F(N,1) & F(N,3) & F(N,5) & \cdots & F(N,N-2) & F(N,N) \\
\end{array} 
\right) ,
\end{equation}
where $F(j,l)=-\imath\braket{A_{j}B_{l}}_{u_1,u_2}$. On the contrary 
\begin{equation}
\braket{\sigma_N^x\Pi^x}_{u_1,u_2} =(-1)^{\frac{N-1}{2}}\det \mathbf{C'} \ ,
\end{equation}
where the $(N-1)/2\times (N-1)/2$ correlation matrix $\mathbf{C'}$ is obtained from $\mathbf{C}$ by removing the last row and the last column.  

The determinants we encounter have a more complicated form than those for which the standard analytical approach \cite{Barouch1971} applies so we have evaluated them numerically.


\section{Perturbative analysis}

The main points of this work can also be seen from a simple perturbative analysis around the classical Ising point $\phi=0$. In addition, the perturbative analysis provides an analytical expressions for the matrix elements in the limit $\phi\to 0^+$.

At the classical Ising point $\phi=0$ the model is diagonal in the basis where $\sigma_j^x$ are diagonal. The ground state manifold is $2N$-fold degenerate and consists of kink states $\ket{j}$ and $\Pi^z\ket{j}$, for $j=1,2,...,N$. Here, the kink state $\ket{j}$ is defined as the state
\begin{equation}
\ket{j}=\ket{...,1,-1,1,1,-1,1,....} \ ,
\end{equation}
with the ferromagnetic bond $\sigma_j^x=\sigma_{j+1}^x=1$ between sites $j$ and $j+1$, and antiferromagnetic bonds between all other adjacent sites. The kink state $\Pi^z\ket{j}$, with all spins reversed, has the ferromagnetic bond $\sigma_j^x=\sigma_{j+1}^x=-1$ and all the other bonds antiferromagnetic. The parity of the states $\ket{j}$ is $\Pi^x=(-1)^{(N-1)/2}$, while $\Pi^z\ket{j}$ have, of course, the opposite parity. The higher energy states are separated by a finite gap and can be neglected in perturbation theory.

Increasing $\phi$ from zero to a small non-zero value the exact degeneracy between the kink states splits. The ground states, and the corresponding energies, are found by diagonalizing the perturbation $\sin\phi\sum_{j}\sigma_j^y\sigma_{j+1}^y$ in the basis of kink states. This has already been done in \cite{Maric2019} and details can be found there. It has been found that the ground states of the model in the limit $\phi\to 0$ are superpositions of kinks
\begin{equation}\label{supp_superposition_kinks}
\ket{s_q}=\frac{1}{\sqrt{N}} \sum_{j=1}^{N}e^{\imath q j}\ket{j}
\end{equation}
and $\Pi^z\ket{s_q}$, for $q\in\Gamma^-$. The corresponding energies are
\begin{equation}
E(q)=-(N-2)\cos\phi+2\sin\phi \cos (2q) \ .
\end{equation}

It's easy to see that for $\phi<0$ the energy is minimized by $q=0$, while for $\phi>0$ it is by $q=p$, where $p$ is given by eq.~\eqref{momentum p}, as in the exact solution. Evaluating the derivative of the ground state energy $E_g$ we find a discontinuity at $\phi=0$,
\begin{equation}
\frac{dE_g}{d\phi}\bigg|_{\phi\to0^-}-\frac{dE_g}{d\phi}\bigg|_{\phi\to0^+}=2\Big(1+\cos\frac{\pi}{N}\Big) ,
\end{equation}
which goes to a constant non-zero value in the thermodynamic limit $N\to\infty$. 

We now turn to the evaluation of the matrix element. We can identify the states from perturbation theory with those from the exact solution, in the limit $\phi\to0$, by looking at the eigenstates of various operators. The translation operator shifts the kink as $T\ket{j}=\ket{j-1}$, from which it follows that the states $\ket{s_q}$ are eigenstates of $T$ with the eigenvalue $e^{\imath q}$. The mirror operator acts on the kink states as $M_N\ket{j}=\ket{-j-1}$, and therefore on the superpositions as
\begin{equation}
M_N\ket{s_q}=e^{-\imath q}\ket{s_{-q}} \ .
\end{equation} 
Knowing that the eigenstates $\ket{q}$ from the exact solution have parity $\Pi^z=-1$, are eigenstates of $T$ with the eigenvalue $e^{\imath q}$ and that under mirroring behave as $M_N\ket{q}=\ket{-q}$ we can make the identification
\begin{equation}
\ket{q}=\frac{1-\Pi^z}{\sqrt{2}}\ket{s_q} \; , \quad \ket{-q}=\frac{1-\Pi^z}{\sqrt{2}}e^{-\imath q}\ket{s_{-q}} \ ,
\end{equation}
up to an irrelevant phase factor which is the same for the two states.

From the identification we can express the matrix elements as
\begin{align}
&\bra{q}\sigma_N^x\Pi^x\ket{-q}=(-1)^{\frac{N-1}{2}}e^{-\imath q}\bra{s_q}\sigma_j^x\ket{s_{-q}} \ ,
\end{align}
where the factor $(-1)^{(N-1)/2}$ stems from the parity of the states $\ket{s_q}$. Using the definition of the states on the right we get
\begin{align}
&\bra{q}\sigma_N^x\Pi^x\ket{-q}=(-1)^{\frac{N-1}{2}}\frac{1}{N}\sum\limits_{j=1}^{N}e^{-\imath 2qj}\bra{j}\sigma_N^x\ket{j} \ , 
\end{align}
which can be evaluated using the property of the kink states
\begin{equation}\label{supp_kink_states_magnetization}
\bra{j} \sigma_N^x \ket{j}=
\begin{cases}
(-1)^{j} , & j=1,2,...,N-1\\
1 , & j=N
\end{cases}
\end{equation}
that follows from their definition. We end up with 
\begin{align}
&\bra{q}\sigma_N^x\Pi^x\ket{-q}=(-1)^{\frac{N-1}{2}}\frac{1}{N \cos q} \ .
\end{align}

The matrix element for the ground state momentum $p=\pi/2+(-1)^{(N+1)/2}\pi/2N$ becomes 
\begin{equation}
\bra{p}\sigma_N^x\Pi^x\ket{-p}=\frac{1}{N\sin\frac{\pi}{2N}} \ ,
\end{equation}
and in the limit $\phi\to 0^+$ determines the maximum value the magnetization achieves over the ring in the ground state $\ket{\tilde{g}}$. For large $N$ it becomes
\begin{equation}
\bra{p}\sigma_N^x\Pi^x\ket{-p}=\frac{2}{\pi}+\frac{\pi}{12N^2}+O(N^{-4}) \ ,
\end{equation}
which approaches quadratically the value $2/\pi \approx 0.64$ in the thermodynamic limit.
\end{widetext}

\def\bibsection{\section*{\refname}}

\end{document}